\newcommand{\subalign}[1]{%
  \vcenter{%
    \Let@ \restore@math@cr \default@tag
    \baselineskip\fontdimen10 \scriptfont\tw@
    \advance\baselineskip\fontdimen12 \scriptfont\tw@
    \lineskip\thr@@\fontdimen8 \scriptfont\thr@@
    \lineskiplimit\lineskip
    \ialign{\hfil$\m@th\scriptstyle##$&$\m@th\scriptstyle{}##$\hfil\crcr
      #1\crcr
    }%
  }%
}
\newcolumntype{C}[1]{>{\centering\arraybackslash}p{#1}}
\newcolumntype{L}[1]{>{\raggedright\arraybackslash}p{#1}}
\declaretheoremstyle[
spaceabove=6pt, spacebelow=6pt,
headfont=\normalfont\bfseries,
notefont=\mdseries, notebraces={(}{)},
bodyfont=\normalfont,
postheadspace=0.6em,
headpunct=:
]{mystyle}
\newcommand{\multiline}[1]{%
  \begin{tabularx}{\dimexpr\linewidth-\ALG@thistlm}[t]{@{}X@{}}
    #1
  \end{tabularx}
}
\renewcommand{\Function}[2]{%
  \csname ALG@cmd@\ALG@L @Function\endcsname{#1}{#2}%
  \def\jayden@currentfunction{#1}%
}
\newcommand{\funclabel}[1]{%
  \@bsphack
  \protected@write\@auxout{}{%
    \string\newlabel{#1}{{\jayden@currentfunction}{\thepage}}%
  }%
  \@esphack
}
\newcommand{\distance}{{\textrm{d}}}
\newcommand{\laplace}{{\tilde{\mathcal{L}}}}
\newcommand{\lnear}{{\tilde{L}}}
\newcommand{\Ic}{\mathcal{I}_c}
\DeclareMathOperator*{\argmin}{arg\,min}
\DeclareMathOperator*{\argmax}{arg\,max}
\algnewcommand{\LineComment}[1]{\Statex \hskip\ALG@thistlm \(\triangleright\) #1}
\newtheorem{thm}{Theorem}[section]
\newtheorem{prop}[thm]{Proposition}
\begin{document}
\newcommand{\equilateral}{
\begin{tikzpicture}[scale=0.8]
    \coordinate[]  (A) at (0,0);
    \coordinate[] (B) at (.4,0);
    \coordinate[] (C) at (.2,.3464);
    \fill [blue] (A) -- (B) -- (C) -- cycle;
  \end{tikzpicture}
}
%
\title{Fréchet Statistics Based Change Point Detection in Dynamic Social Networks} 

\author{Rui~Luo,
        Vikram~Krishnamurthy,~\IEEEmembership{Fellow,~IEEE}
\IEEEcompsocitemizethanks{\IEEEcompsocthanksitem R. Luo is with the Sibley School of Mechanical and Aerospace Engineering, Cornell University, Ithaca, NY, 14850.\protect\\
E-mail: rl828@cornell.edu
\IEEEcompsocthanksitem V. Krishnamurthy is with the School of Electrical and Computer Engineering, Cornell University, Ithaca, NY, 14850.\protect\\
E-mail: vikramk@cornell.edu
\IEEEcompsocthanksitem This research was supported in part by the  U. S. Army Research Office under grant W911NF-21-1-0093, and the National Science Foundation under grant CCF-2112457.\protect\\}}

\IEEEtitleabstractindextext{
\begin{abstract}
This paper proposes a method to detect change points in dynamic social networks using Fréchet statistics. We address two main questions: (1) what metric can quantify the distances between graph Laplacians in a dynamic network and enable efficient computation, and (2) how can the Fréchet statistics be extended to detect multiple change points while maintaining the significance level of the hypothesis test?
Our solution defines a metric space for graph Laplacians using the Log-Euclidean metric, enabling a closed-form formula for Fréchet mean and variance. We present a framework for change point detection using Fréchet statistics and extend it to multiple change points with binary segmentation. The proposed algorithm uses incremental computation for Fréchet mean and variance to improve efficiency and is validated on simulated and two real-world datasets, namely the UCI message dataset and the Enron email dataset.
\end{abstract}

\begin{IEEEkeywords}
Fréchet statistics, metric space, change point detection, dynamic social network, binary segmentation. 
\end{IEEEkeywords}}

\maketitle

\IEEEdisplaynontitleabstractindextext

%
\IEEEpeerreviewmaketitle

\section{Introduction}
\label{sec:introduction}
In recent years, the availability of social network data has increased the demand for statistical network analysis in areas such as socializing, information sharing, and collaborative work. Researchers are interested in analyzing the dynamics of social networks to gain insights into social phenomena and to predict future events. 

Detecting changes in social network structures is critical to identify emerging trends and patterns that can provide insight into social dynamics, including the emergence of new social groups, the spread of social influence, or changes in social behavior. Change point detection is a fundamental technique for community detection and identifying the formation of echo chambers \cite{luo2021echo}, which can amplify bias and increase misinformation in online social networks \cite{luo2022mitigating}. By integrating changes in temporal patterns into segregation models, we can better understand and ultimately mitigate the effects of echo chambers on online social networks. Change point information is also useful in social learning setups \cite{krishnamurthy2022dynamics}, where risk-averse agents adjust their estimation of a varying network state.

Existing change point detection approaches can be categorized into three main groups: non-parametric methods, parametric methods, and Bayesian methods. Non-parametric methods, such as cumulative sum (CUSUM) \cite{wang2020univariate} and sliding window methods \cite{aminikhanghahi2017survey}, do not assume any specific distribution for the data and can handle both abrupt and gradual changes. Parametric methods, such as autoregressive models and Markov models \cite{truong2020selective}, assume a specific distribution for the data and can achieve higher accuracy but require more computational resources. Bayesian methods, such as Bayesian change point analysis \cite{adams2007bayesian}, use probabilistic models for the prior of the change point and the observation likelihood to estimate the change points and can handle uncertainty and missing data.

However, detecting changes in social networks is a challenging problem due to the large size of social networks and their complex dynamics. Additionally, the non-Euclidean nature of networks presents a challenge because traditional statistical tools developed for scalar and vector data are inadequate. Therefore, there is a need for efficient and effective methods to detect changes in social networks. Fréchet mean and variance provide a method for calculating mean and variance for metric space-valued random variables, which allows us to examine statistical data for data items located in abstract spaces without algebraic structure and operations such as networks. By defining networks using graph Laplacians and computing their Fréchet mean and variance, we can establish their location and spread.

This paper focuses on estimating the locations of potentially multiple change points in dynamic social networks represented as a sequence of graph snapshots. To do so, we use Fréchet means and variances to construct a test statistic, which, under the null hypothesis of no change point, converges to a Brownian bridge process. This result is based on Theorem 1 in \cite{dubey2020frechet}. More specifically, we address the following questions:

\noindent
{\bf Q1. (Metric Choice)} Given the set of graph Laplacians corresponding to the graph snapshots of a dynamic social network, what is a suitable metric that quantifies their distances and enables efficient computation?

\noindent
{\bf Q2. (Multiple Change Point)} Assuming the Fréchet statistics constructed for the given metric space can detect a single change point in dynamic social networks, how can we extend it to the multiple change point setting while maintaining the significance level of the hypothesis test?

\vspace{0.1in}

\noindent
{\bf Main Results and Organization: }

\noindent (1) In Section \ref{sec:metric}, we define a metric space for graph Laplacians by finding the nearest symmetric positive definite matrix for each Laplacian, and then use the Log-Euclidean metric to measure their distances. This allows us to derive a closed-form formula for Fréchet mean and variance under this metric, which provides a more accurate and efficient way of measuring the distances between graphs. We also derive a closed-form formula for Fréchet mean and variance under this metric in Section \ref{subsec:metric of graph}.

\noindent (2) In Section \ref{sec:frechet}, we present a framework for change point detection using Fréchet statistics and generalize it to multiple change points with binary segmentation. Our primary result, Algorithm \ref{alg:bisect}, uses incremental computation for Fréchet mean and variance to improve computational efficiency.

\noindent (3) Finally, in Section \ref{sec: empirical}, we validate our proposed algorithm on simulated networks as well as real-world UCI message network and Enron email network. We confirm the algorithm's performance through ground truth change points in network simulation and real events for the real-world networks.

\vspace{0.1in}
\noindent
{\bf Related Works:}

\noindent (1) \textit{Metrics for Symmetric Positive (Semi-)Definite Matrices. } 
The space of symmetric positive definite (SPD) matrices, which is widely used in computer vision, medical imaging, and machine learning, has been the focus of much research on matrix metrics. The Frobenius metric, defined as $\delta_F(X, Y) = \|X-Y \|_{\textrm{F}} = \sqrt{\sum_{i=1}^{m} \sum_{j=1}^{n} (x_{ij}-y_{ij})^{2}}$,  is a commonly used metric for matrices, but it suffers from the swelling effect,  i.e., the determinant of the average is larger than any of the original determinants \cite{lin2019riemannian}. To overcome this issue, researchers have proposed various non-Euclidean metrics that represent and handle SPD matrices as elements of a differentiable Riemannian manifold.

Several metrics have been proposed for SPD matrices, including  the Riemannian metric $\delta_R(X, Y) = \| \log(Y^{-1/2} X Y^{-1/2}) \|_{\textrm{F}}$ \cite{bhatia2009positive}, the Log-Euclidean metric $\delta_{LE} (X, Y) = \|\log(X) - \log(Y) \|_{\textrm{F}}$ \cite{arsigny2007geometric}, and the Jensen-Bregman “log-det” based matrix divergence $\delta_J(X, Y) = \textrm{logdet}(\frac{A+B}{2}) -\frac{1}{2}\textrm{logdet}(AB)$\cite{cherian2012jensen}. 
Vemulapalli and Jacobs \cite{vemulapalli2015riemannian} summarized the properties of these metrics, including their invariance to various transformations, and identified the Riemannian and Log-Euclidean metrics as the most popular ones. 
While most of the metrics are defined for SPD matrices of fixed dimension, Lim et al. \cite{lim2019geometric} defined a geometric distance between a pair of SPD matrices of different dimensions, which is described in terms of ellipsoids. 

In contrast to SPD matrices, symmetric positive semi-definite (SPSD) matrices are singular and do not have matrix logarithms. There are two main approaches to define metrics on SPSD matrices. The first approach \cite{vandereycken2013riemannian, massart2020quotient} exploits $S_{+}(p, n)$, the manifold of rank-$p$ SPSDs of size $n$, which can be identified with the quotient manifold  $\mathbb{R}_{*}^{n\times p} / \mathcal{O}_p$, where $\mathbb{R}_{*}^{n\times p}$ is the set of full-rank $n \times p$ matrices and $\mathcal{O}_p$ is the orthogonal group of order $p$. Bonnabel and Sepulchre \cite{bonnabel2010riemannian} proposed a metric for  $S_{+}(p, n)$ that is invariant with respect to all transformations that preserve angles and derived the geometric mean. 
The second approach involves adding a regularization term to transform the SPSD matrix to a PSD or truncating the spectrum. 
Dodero et al. \cite{dodero2015kernel} regularized the graph Laplacian to become positive definite by adding a regularization term and used the Log-Euclidean metric for downstream classification tasks.  
Shnitzer et al. \cite{shnitzer2022log} truncated the full spectrum of diffusion operators to a fixed length and proved that the spectrum truncation preserves the lower bound of the Log-Euclidean metric.

\noindent (2) \textit{Fréchet Analysis of Graph Laplacians. } 
The Fréchet mean \cite{frechet1948elements} is a concept in statistics that provides a representative center of a set of data objects in a metric space. In the context of graph Laplacians, the Fréchet mean can be used to represent the central point in the space of graph Laplacians, which facilitates further analysis and comparisons. 

Zhou and Müller \cite{zhou2022network} recently addressed the network regression problem where the responses are graph Laplacians and the covariates are properties of interest, such as COVID-19 new cases for a public traffic network and subject age for a brain connectivity network. The approach is based on conditional Fréchet mean regression \cite{petersen2019frechet}, and it allows for modeling the relationship between the Laplacians and the covariates, as well as for predicting the Laplacians for new covariate values.
Dubey and Müller \cite{dubey2022modeling} developed a framework for analyzing networks by studying the Fréchet mean of a collection of graphs. The approach relies on the Frobenius metric as a distance measure, which quantifies the similarity between two graphs in terms of their topology. The authors use the sample average of the graph Laplacians as the Fréchet mean trajectory to summarize the topology evolution of the collection of graphs, which can be useful in applications such as traffic prediction, shape analysis, and network modeling. 
Ferguson and Meyer \cite{ferguson2022theoretical} proposed a pseudometric for graphs defined by the $l_2$ norm between the eigenvalues of the adjacency matrices. They also provide an algorithm to approximate the sample Fréchet mean of a set of undirected unweighted graphs with a fixed size using the pseudometric. The algorithm has low computational complexity and can be used to estimate the Fréchet mean for large datasets. The authors illustrate the usefulness of their approach by applying it to synthetic datasets such as the Barabasi-Albert graphs and small world graphs.

\noindent (3) \textit{Change Point Detection in Social Networks. } 
In the field of change point detection in social networks, Wang et al. \cite{wang2017fast} proposed an algorithm based on a Markov generative process to analyze graph snapshots of dynamic social networks. The algorithm was tested on real-world networks, including political voting networks, but it fails to account for long-term dependence and assumes rare changes. 
Masuda and Holme \cite{masuda2019detecting} used a graph distance measure and hierarchical clustering to detect and cluster evolving states in social temporal networks. Their approach assumes the entire network system is described by a single system state, which could be relaxed to multi-state setup for social networks with community structure. 
Zhao et al. \cite{zhao2019change} introduced a model-free change point detection method for dynamic social networks that uses neighborhood smoothing to estimate edge probabilities. However, the algorithm is not applicable to directed networks or networks with an evolving number of nodes. 
Grattarola et al. \cite{grattarola2019change} proposed a data-driven method for detecting changes in stationarity in a stream of attributed graphs. They used an adversarial autoencoder to embed graphs on constant-curvature manifolds, and employed the Fréchet mean to represent the average of networks. The geodesic distances between embedded graphs and the Fréchet mean were then used to identify potential changes. Although the proposed method is effective in detecting changes in stationarity, it is not applicable to multiple change point settings.

\section{Metric Space for Dynamic Networks}\label{sec:metric}
In this section, we present an approach that quantifies the distance between two networks in a dynamic network setup. We introduce a metric space for networks that allows us to measure the similarity between two networks based on their geometric properties. Specifically, we define the metric between two networks as the Log-Euclidean metric of the nearest symmetric positive definite (SPD) matrices of their respective graph Laplacians. Moreover, this metric admits a closed-form Fréchet mean, which allows us to characterize the average structure of a set of networks.

\subsection{Dynamic Network}
A dynamic network \cite{hulovatyy2016scout} is a sequence of graph snapshots $G^{(1)}, G^{(2)}, \cdots$, where each snapshot $G^{(t)} = (V^{(t)}, E^{(t)})$ represents the undirected\footnote{In Section \ref{subsec:metric of graph}, we discuss how our method can be extended to handle directed graphs. Specifically, we describe how to identify the nearest symmetric positive definite (SPD) matrix for an asymmetric graph Laplacian.} weighted graph observed at discrete time $t$. We further restrict that $V^{(1)} =V^{(2)} =\cdots= V$, so that all the graph snapshots have the same set of vertices. To address the issue of varying node numbers in real-world dynamic network, one approach \cite{huang2020laplacian} is to add isolated nodes to the graph snapshots so that each one has the same number of nodes.

For a graph snapshot $G^{(t)}$ with $N=|V^{(t)}|$ nodes, the $(i, j)$ entry of the $N \times N$ adjacency matrix $a_{ij}$ represents the edge weight between nodes $i$ and $j$. 
The graph Laplacian $L^{(t)}$ is given by $L^{(t)} = D^{(t)} - A^{(t)}$, where $D^{(t)}$ is the diagonal degree matrix whose diagonal entries are the sum of the weights of the edges incident to each node, $d^{(t)}_{ii} = \sum_{j=1}^{N} a^{(t)}_{ij}$. The graph Laplacians determine the network uniquely.

\subsection{Metric over the Space of Graph Laplacians}\label{subsec:metric of graph}
The graph Laplacian $L^{(t)}$ is positive semi-definite, and its rank equals the number of nodes minus the number of communities, i.e., disconnected components in the graph. In \cite{ginestet2017hypothesis}, Ginestet et al. characterize the set of $N\times N$ graph Laplacians with rank $l$ as a submanifold of $\mathbb{R}^{N^2}$ of dimension $Nl -
l(l+1)/2$. 

The singularity of the graph Laplacian restricts the application of SPD metrics, such as the Log-Euclidean metric. To avoid this issue, we adopt an algorithm \cite{cheng1998modified} to locate the nearest symmetric positive definite (SPD) matrix to $L^{(t)}$ in the Frobenius norm. This algorithm is based on the SVD of $L^{(t)}$ and is numerically stable and efficient. The resulting SPD, $\tilde{L}^{(t)}$, is then used in place of $L^{(t)}$ in defining the metric space. For directed graphs, Theorem 2 in \cite{halmos1972positive} provides the 2-norm distance between a matrix (which may not be symmetric) and the nearest symmetric positive semidefinite (SPSD) matrix. We then use the algorithm \cite{cheng1998modified} to identify the nearest SPD of the original graph Laplacian.

We define a metric space $(\laplace, \distance)$ for the graph snapshots. $\laplace$ is the set of the nearest SPD matrices of graph Laplacians, and $\distance$ is a function ${\displaystyle \distance\,\colon \laplace\times \laplace\to \mathbb {R}_{+} }$. 
We define $\distance$ using the Log-Euclidean metric $\delta_{LE}(X, Y) = \|\log(X) - \log(Y)\|_{\textrm{F}}$. 
The Log-Euclidean metric is defined as a bi-invariant metric on the Lie group \cite{bourbaki2008lie} of SPD matrices, which is viewed as the classical Euclidean metric on the vector space \cite{arsigny2007geometric}. 

Assume that $\lnear^{(1)}, \cdots, \lnear^{(n)} \sim F$ are independent and identically distributed random variables in $(\laplace, \distance)$, it admits a closed-form Fréchet mean which is unique (See Theorem 3.13 in \cite{arsigny2007geometric}):
\begin{equation}\label{eq:closed form mean}
    \mu_F = \exp \left( \mathbb{E} \left(\log \left(\lnear \right) \right) \right), \; \hat{\mu}_F = \exp(\frac{1}{n} \sum_{i=1}^{n} \log(\lnear^{(i)})),
\end{equation}
where $\exp$ is the matrix exponential, and $\mu_F$, $\hat{\mu}_F$ are the population Fréchet mean, sample Fréchet mean, respectively. 
The sample Fréchet mean $\hat{\mu}_F$ is asymptotically consistent due to its existence and uniqueness \cite{petersen2019frechet}.

The Fréchet variance quantifies the spread of the random variable around its Fréchet mean. The population Fréchet variance and its sample version for $\lnear^{(1)}, \cdots, \lnear^{(n)} \sim F$ are:
\begin{equation}\label{eq:closed form variance}
\begin{split}
    V_F &= \mathbb{E}\left( \distance^2 (\mu_F, \lnear ) \right), \\
    \hat{V}_F &= \frac{1}{n} \sum_{i=1}^{n} \distance^2(\hat{\mu}_F, \lnear^{(i)}) \\
    & = \frac{1}{n} \sum_{i=1}^{n} \|\frac{1}{n} \sum_{j=1}^{n} \log(\lnear^{(j)}) - \log(\lnear^{(i)}) \|_{\textrm{F}}^2  \\
\end{split}
\end{equation}

The following proposition established the Central Limit Theorem for the sample Fréchet variance $\hat{V}_F$ in the metric space $(\laplace, \distance)$ under certain assumptions (see Assumptions 1-3 in \cite{dubey2019frechet}, which relate to the existence and uniqueness of $\hat{\mu}_F$, the complexity bound of the metric space, and the finiteness of the entropy integral of the metric space):

\begin{prop}[Central limit theorem for the Fréchet variance]\label{prop:CLT variance}
    Under certain assumptions (Assumptions 1-3 in \cite{dubey2019frechet}), 
    \begin{equation}\label{eq:CLT variance}
    \begin{split}
        n^{1/2} (\hat{V}_F - V_F) \rightarrow N(0, \sigma^2_F) \; \textrm{in distribution,}
    \end{split}
    \end{equation}
where $\sigma^2_F = \textrm{Var}\{\distance^2(\mu_F, \lnear) \}$.
\end{prop}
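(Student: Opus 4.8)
The plan is to exploit the flat (Euclidean) structure that the Log-Euclidean metric imposes on $(\laplace, \distance)$, which reduces the statement to the classical central limit theorem for the empirical variance of i.i.d.\ Euclidean random vectors. The matrix logarithm $\log\colon \laplace \to \mathrm{Sym}(N)$ is, by the very definition of $\delta_{LE}$, an isometry onto its image in the space of symmetric matrices equipped with the Frobenius inner product; identifying $\mathrm{Sym}(N)$ with $\mathbb{R}^d$, $d = N(N+1)/2$, the distance $\distance(X,Y) = \|\log X - \log Y\|_{\mathrm{F}}$ becomes an ordinary Euclidean distance. Consequently, two routes are available: (i) verify Assumptions 1--3 of \cite{dubey2019frechet} for this space and invoke their general theorem, or (ii) give a short self-contained proof. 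Since the space is finite-dimensional and flat, I would take route (ii).

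Setting $Y^{(i)} = \log(\lnear^{(i)})$, the $Y^{(i)}$ are i.i.d.\ in $\mathbb{R}^d$ with mean $\mu := \mathbb{E}[Y] = \log(\mu_F)$ by \eqref{eq:closed form mean}. First I would rewrite the estimator in König--Huygens form,
\begin{equation*}
    \hat{V}_F = \frac{1}{n}\sum_{i=1}^n \|\bar Y - Y^{(i)}\|_{\mathrm{F}}^2 = \frac{1}{n}\sum_{i=1}^n \|Y^{(i)}\|_{\mathrm{F}}^2 - \|\bar Y\|_{\mathrm{F}}^2,
\end{equation*}
where $\bar Y = \frac{1}{n}\sum_i Y^{(i)}$, exhibiting $\hat{V}_F$ as the plug-in empirical variance of the $Y^{(i)}$ and $V_F = \mathbb{E}\|Y-\mu\|_{\mathrm{F}}^2$ as its population counterpart.

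The central step is to center the sample. Writing $W^{(i)} = Y^{(i)} - \mu$ and $\bar W = \frac{1}{n}\sum_i W^{(i)}$, a direct expansion cancels the mean terms and yields $\hat{V}_F = \frac{1}{n}\sum_i \|W^{(i)}\|_{\mathrm{F}}^2 - \|\bar W\|_{\mathrm{F}}^2$. Since $\bar W = O_p(n^{-1/2})$, the quadratic remainder satisfies $\sqrt{n}\,\|\bar W\|_{\mathrm{F}}^2 = o_p(1)$, so that
\begin{equation*}
    \sqrt{n}\,(\hat{V}_F - V_F) = \sqrt{n}\left(\frac{1}{n}\sum_{i=1}^n \|W^{(i)}\|_{\mathrm{F}}^2 - \mathbb{E}\|W\|_{\mathrm{F}}^2\right) + o_p(1).
\end{equation*}
The leading term is $\sqrt{n}$ times the centered sample mean of the i.i.d.\ scalars $\|W^{(i)}\|_{\mathrm{F}}^2 = \distance^2(\mu_F, \lnear^{(i)})$, so the ordinary univariate CLT gives convergence to $N(0, \sigma^2_F)$ with $\sigma^2_F = \mathrm{Var}\{\distance^2(\mu_F, \lnear)\}$, and Slutsky's theorem absorbs the $o_p(1)$ term.

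The main obstacle is conceptual rather than analytic: recognizing that the Log-Euclidean geometry trivializes the Fréchet variance into a Euclidean empirical variance, after which only a moment hypothesis must be checked. Concretely, the univariate CLT requires $\mathrm{Var}(\|W\|_{\mathrm{F}}^2) < \infty$, i.e.\ a finite fourth moment of $\log(\lnear)$; this is exactly what the assumptions of \cite{dubey2019frechet} guarantee, with uniqueness of $\hat{\mu}_F$ automatic here from \eqref{eq:closed form mean} and the entropy-integral condition holding trivially in finite dimension. If one instead prefers to remain inside the general framework of route (i), the only nontrivial verification is the metric-entropy bound, which is immediate since bounded subsets of $\mathbb{R}^d$ have finite entropy integral.
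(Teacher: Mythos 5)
Your argument is correct, but it is not the paper's: the paper proves Proposition \ref{prop:CLT variance} purely by citation, deferring entirely to the general result of \cite{dubey2019frechet} for random objects in abstract metric spaces satisfying their Assumptions 1--3 (uniqueness of the Fréchet mean, an entropy/complexity bound, and boundedness conditions on the space). You instead give a self-contained, elementary proof that exploits the specific structure here: under the Log-Euclidean metric, $\log$ is an isometry from $(\laplace,\distance)$ into symmetric matrices with the Frobenius norm, the Fréchet mean in (\ref{eq:closed form mean}) is the exponential of the ordinary sample mean, and hence $\hat{V}_F$ is exactly the classical plug-in empirical variance of the i.i.d.\ vectors $\log(\lnear^{(i)})$. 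Your decomposition (König--Huygens identity, centering at the population mean, $\sqrt{n}\,\|\bar W\|_{\mathrm{F}}^2=o_p(1)$, then the univariate CLT for $\|W^{(i)}\|_{\mathrm{F}}^2=\distance^2(\mu_F,\lnear^{(i)})$ plus Slutsky) is standard and sound. What the citation route buys is generality and uniformity: the machinery of \cite{dubey2019frechet} is what the paper actually needs downstream, since the change-point statistic $T_n(u)$ and its Brownian-bridge limit rest on the process-level version of this result from \cite{dubey2020frechet}, not just the pointwise CLT. What your route buys is transparency: it makes explicit that in this flat geometry the only substantive hypothesis is a finite fourth moment of $\log(\lnear)$ (automatic if the space is bounded, as the cited assumptions effectively require), and that the entropy condition is vacuous in finite dimension. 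One small point worth flagging: the closed form $\hat{\mu}_F=\exp(\bar Y)$ is the unconstrained minimizer over all SPD matrices, so your identification of $\hat V_F$ with the empirical variance implicitly assumes, as the paper does, that the Fréchet mean is taken over the full SPD cone rather than over the possibly smaller set $\laplace$ of nearest-SPD images of Laplacians.
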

\begin{proof}
    See \cite{dubey2019frechet}.
\end{proof}

\section{Fréchet Statistcs-based Change Point Detection}\label{sec:frechet}
This section addresses network change point detection, which involves identifying changes in the statistical properties of a sequence of graphs. The focus is on the offline (retrospective) change point detection setup, where an ordered sequence of observations is available. The primary objective is to estimate both the location and number of change points, which can then be used to segment the dataset into different regimes, under the assumption that the data within each regime comes from some common underlying distribution.

\subsection{Estimating the Location of a Change Point}\label{subsec:one change point}
Let us consider an independent time-ordered sequence $\{Y^{(i)}\}_{i=1}^{n}$ that takes values in a metric space $(\laplace, \distance)$ defined in Section \ref{subsec:metric of graph}. In the simplest case, we hypothesize that there is at most one change point location, denoted by $0<\tau<1$. Specifically, $Y^{(1)}, \cdots, Y^{(\lfloor n\tau \rfloor)} \sim F_1$ and $Y^{(\lfloor n\tau \rfloor +1)}, \cdots, Y^{(n)} \sim F_2$, where $F_1$ and $F_2$ are unknown probability measures on $(\laplace, \distance)$ and $\lfloor x \rfloor$ is the greatest integer less than or equal to $x$. In this context, the aim is to test the null hypothesis of distribution homogeneity, denoted by $H_0: F_1 = F_2$, against the alternative hypothesis of a single change point, denoted by $H_1: F_1 \neq F_2$. 

In change point detection, it is often necessary to ensure that each segment of the observed sequence is of sufficient size to accurately represent its underlying statistical properties, such as the Fréchet mean and variance. To achieve this, we assume that the hypothesized change point location $\tau$ lies within a compact interval $\Ic=[c, 1-c] \subset [0, 1]$, for some positive constant $c$. Alternatively, other types of segment size constraints can be imposed, such as the minimum cluster size \cite{matteson2014nonparametric}, which specifies a priori the minimum number of consecutive observations required to form a segment.

To characterize the statistical properties of data from two segments separated by $u \in \Ic$, we compute the sample Fréchet mean of the segment consisting of observations before and after $\lfloor nu \rfloor$ as
\begin{equation}
\begin{split}
    \hat{\mu}_{[0, u]} &= \argmin\limits_{l \in \laplace} \frac{1}{\lfloor n\tau \rfloor} \sum\limits_{i=1}^{\lfloor n\tau \rfloor} \distance^2(Y^{(i)}, l),\\
    \hat{\mu}_{[u, 1]} &= \argmin\limits_{l \in \laplace} \frac{1}{n-\lfloor n\tau \rfloor} \sum\limits_{i=\lfloor n\tau \rfloor +1}^{n} \distance^2(Y^{(i)}, l), \nonumber
\end{split}
\end{equation}
and the corresponding sample Fréchet variance are 
\begin{equation}\label{eq:variance}
\begin{split}
    \hat{V}_{[0, u]} &=  \frac{1}{\lfloor n\tau \rfloor} \sum\limits_{i=1}^{\lfloor n\tau \rfloor} \distance^2(Y^{(i)}, \hat{\mu}_{[0, u]}),\\
    \hat{V}_{[u, 1]} &= \frac{1}{n-\lfloor n\tau \rfloor} \sum\limits_{i=\lfloor n\tau \rfloor +1}^{n} \distance^2(Y^{(i)}, \hat{\mu}_{[u, 1]}), 
\end{split}
\end{equation}

The contaminated version of Fréchet variances can be obtained by replacing the Fréchet mean of a segment with the mean of the complementary segment. This leads to the definitions: 
\begin{equation}
\begin{split}
    \hat{V}^C_{[0, u]} &=  \frac{1}{\lfloor n\tau \rfloor} \sum\limits_{i=1}^{\lfloor n\tau \rfloor} \distance^2(Y^{(i)}, \hat{\mu}_{[u, 1]}),\\
    \hat{V}^C_{[u, 1]} &= \frac{1}{n-\lfloor n\tau \rfloor} \sum\limits_{i=\lfloor n\tau \rfloor +1}^{n} \distance^2(Y^{(i)}, \hat{\mu}_{[0, u]}), \nonumber
\end{split}
\end{equation}
which are guaranteed to be at least as large as the correct version (\ref{eq:variance}). The differences $\hat{V}^C_1 - \hat{V}_{[0, u]}$ and $\hat{V}^C_2 - \hat{V}_{[u, 1]}$ can be interpreted as measures of the between-group variance of the two segments. 

Suppose we fix some $u \in \Ic$. As a result of the central limit theorem for Fréchet variances (Proposition \ref{prop:CLT variance}), the statistic $\sqrt{u(1-u)}(\sqrt{n}/\sigma)(\hat{V}_{[0, u]} - \hat{V}_{[u, 1]})$ has an asymptotic standard normal distribution under the null hypothesis $H_0$. Here, $\sigma$ denotes the asymptotic variance of the empirical Fréchet variance. This result provides a powerful tool for statistical inference, allowing us to test hypotheses about differences in Fréchet variances between two segments of data.  A sample based estimator for $\sigma^2$ (\ref{eq:CLT variance}) is 
\begin{equation}
    \hat{\sigma}^2 = \frac{1}{n} \sum_{i=1}^{n} \distance^4(\hat{\mu}, \lnear_i) - \left( \frac{1}{n} \sum_{i=1}^{n} \distance^2(\hat{\mu}, \lnear_i) \right)^2, \nonumber
\end{equation}
which is consistent under $H_0$ \cite{dubey2019frechet}, and 
\begin{equation}
    \hat{\mu}=\argmin\limits_{l\in \laplace} \frac{1}{n} \sum\limits_{i=1}^{n} \distance^2(Y^{(i)}, l), \quad 
    \hat{V} = \frac{1}{n} \sum\limits_{i=1}^{n} \distance^2(Y^{(i)}, \hat{\mu}). \nonumber
\end{equation}

We adopt the test statistic proposed in \cite{dubey2020frechet}, which is capable of detecting differences in both Fréchet means and Fréchet variances of the distributions $F_1$ and $F_2$. 
\begin{equation}\label{eq:Tn(u)}
\begin{split}
    T_n(u) = & \frac{u(1-u)}{\hat{\sigma}^2} \Big[ (\hat{V}_{[0, u]} - \hat{V}_{[u, 1]})^2 \\ 
    & \phantom{---} + (\hat{V}_{[0, u]}^C - \hat{V}_{[0, u]} + \hat{V}_{[u, 1]}^C - \hat{V}_{[u, 1]})^2 \Big]
\end{split}
\end{equation}
The quantity $(\hat{V}_{[0, u]} - \hat{V}_{[u, 1]})^2$ provides a measure of the difference in Fréchet variances between two segments of data. Specifically, a larger value of $(\hat{V}_{[0, u]} - \hat{V}_{[u, 1]})^2$ indicates a greater difference in the variability of the data in the two segments. On the other hand, $(\hat{V}_{[0, u]}^C - \hat{V}_{[0, u]} + \hat{V}_{[u, 1]}^C - \hat{V}_{[u, 1]})^2$ captures the difference in Fréchet means between the two segments. 

Theorem 1 in \cite{dubey2020frechet} shows that under $H_0$, $\{nT_n(u): u\in \Ic \}$ converges weakly\footnote{Weak convergence is a function space generalization of convergence in distribution \cite{billingsley2013convergence}.} to the square of a standardized Brownian bridge on the interval $\Ic$, which is given by
\begin{equation}
    \mathcal{G} = \left\{ \frac{\mathcal{B}(u)}{\sqrt{u(1-u)}}: u\in \Ic \right\}, \nonumber
\end{equation}
where $\{\mathcal{B}(u): u\in \Ic \}$ is a Brownian bridge on $\Ic$, i.e., a Gaussian process indexed by $\Ic$ with zero mean and covariance structure given by $K(s, t) = \min(s,t) - st$. 
To perform a hypothesis test between $H_0$ and $H_1$, we use the statistic
\begin{equation} 
    \sup\limits_{u\in \Ic} nT_n(u) = \max\limits_{\lfloor nc \rfloor \leq k \leq n - \lfloor nc \rfloor} nT_n(\frac{k}{n}) \nonumber
\end{equation}
Here, $T_n(u)$ is a test statistic for the hypothesis test, which is computed for each potential change point $u\in \Ic$. 
To proceed, we need to obtain the $(1-\alpha)$th quantile of $\sup_{u\in \Ic} \mathcal{G}^2(u)$, denoted as $q_{1-\alpha}$. However, calculating this quantity by Monte Carlo simulations of $\mathcal{G}^2(\cdot)$ is inefficient or even infeasible when the dimension of the data is moderate to high. To overcome this, we employ a bootstrap approach, as described in Section 3.3 of \cite{dubey2020frechet}, to estimate $q_{1-\alpha}$.

Under $H_0$, the following  weak convergence holds:
\begin{equation}
    \sup\limits_{u\in \Ic} nT_n(u) \Rightarrow \sup\limits_{u\in \Ic} \mathcal{G}^2(u) \nonumber
\end{equation}
Using this, we define the rejection region for a level $\alpha$ significance test as:
\begin{equation}\label{eq:rejection region alpha}
    R_{n,\alpha} = \left\{ \sup\limits_{u\in \Ic} nT_n(u) > q_{1-\alpha} \right\}
\end{equation}

Under $H_1$, which assumes a change point is present at $\tau \in \Ic$, we can locate it by finding the maximizer of the process $T_n(u)$:
\begin{equation}
    \hat{\tau} = \argmax\limits_{u\in \Ic} T_n(u) = \argmax\limits_{\lfloor nc \rfloor \leq k \leq n - \lfloor nc \rfloor} T_n(\frac{k}{n})
\end{equation}
Here, $\hat{\tau}$ is the estimated change point, which maximizes the test statistic across all potential change points. 

By leveraging the closed form of the Fréchet mean (\ref{eq:closed form mean}) and variance (\ref{eq:closed form variance}), we have developed a recursive formula for updating these values incrementally, which is shown in the $\mathbf{\ref{func:incremental}}$ function of Algorithm~\ref{alg:bisect}. This approach significantly reduces the time complexity from $\mathcal{O}(n^2)$ to $\mathcal{O}(n)$, making it computationally efficient while still maintaining accuracy.

\subsection{Binary Segmentation for Estimating Multiple Change Points}
This subsection presents a binary segmentation procedure that extends the proposed statistic $T_n(u)$ (\ref{eq:Tn(u)}) to the multiple change point scenario. 
Binary segmentation is a computationally efficient tool that searches for multiple breakpoints in a recursive manner \cite{cho2012multi}. 
Assuming that $k-1$ change points have been estimated at locations $0 < \hat{\tau}_1 < \cdots < \hat{\tau}_{k-1} < 1$, the data is divided into $k$ clusters $\widehat{C}_1, \cdots, \widehat{C}_k$. 
Each cluster $\widehat{C}_j$ consists of observations between $\lfloor n \hat{\tau}_{j-1} \rfloor + 1$ and $\lfloor n \hat{\tau}_j \rfloor$, where $\hat{\tau}_0=0$ and $\hat{\tau}_k=1$ for brevity.

In the binary segmenation algorithm, we estimate the $k$th change point by applying the single change point procedure to the observations within one cluster. Let us consider the $j$th cluster. To accomplish this, we use a statistic for the $\widehat{C}_j$, denoted by $T_{n_j}(u)$, which is defined as follows:
\begin{equation}
\begin{split}
    T_{n_j}(u) = & \frac{u(1-u)}{\hat{\sigma}^2} \Big[ (\hat{V}_{[\hat{\tau}_{j-1}, u]} - \hat{V}_{[u, \hat{\tau}_j]})^2 \\ 
    & \phantom{---} + (\hat{V}_{[\hat{\tau}_{j-1}, u]}^C - \hat{V}_{[\hat{\tau}_{j-1}, u]} + \hat{V}_{[u, \hat{\tau}_j]}^C - \hat{V}_{[u, \hat{\tau}_j]})^2 \Big]
\end{split}
\end{equation}
Now, the estimated location of the $k$th change point is
\begin{equation}
    \hat{\tau}_k = \argmax\limits_{\hat{\tau}_{j-1} < u < \hat{\tau}_j} T_n(u)    
\end{equation}
Similar to the segment size constraint imposed by $\Ic$ in the one change point setting (Section \ref{subsec:one change point}), we assume that $c \leq \frac{\hat{\tau}_k - \hat{\tau}_{j-1}}{\hat{\tau}_j - \hat{\tau}_{j-1}} \leq 1-c$.

If a change point is detected at $\hat{\tau}_k$, we split the $j$th cluster $\widehat{C}_j$ into two new clusters: $Y^{\lfloor n \hat{\tau}_{j-1} \rfloor + 1}, \cdots, Y^{\lfloor n \hat{\tau}_{k} \rfloor}$, and $Y^{\lfloor n \hat{\tau}_k \rfloor + 1}, \cdots, Y^{\lfloor n \hat{\tau}_{j} \rfloor}$. We then proceed to analyze the two new clusters separately. The procedure is summarized in Algorithm \ref{alg:bisect}.

\renewcommand*\footnoterule{}
\begin{savenotes}
\begin{algorithm}
\caption{\textbf{Fréchet Binary Segmentation for Multiple Change Point Estimation}}\label{alg:bisect}
\textbf{Input}: 
Significance level $\alpha$, minimum segment length parameter $c$, bootstrap sample size $B$, length of each bootstrap sample $m$, a sequence of nearest SPDs corresponding to each snapshot of the dynamic network $\tilde{\mathrm{L}}=\{\lnear^{(1)}, \cdots, \lnear^{(n)}\}$. \

\textbf{Output}: A set of detected change points $\hat{\mathrm{T}} = \{\hat{\tau}_1, \hat{\tau}_2, \cdots \}$.

\begin{algorithmic}[1]
\Function{IncrementalFrechetStatistics}{$\tilde{\mathrm{L}}$} \funclabel{func:incremental}
\State $\hat{\mu}_{[0, \frac{1}{n}]} = \lnear^{(1)}$, $\hat{\mu}_{[\frac{n-1}{n}, 1]} = \lnear^{(n)}$, $\hat{V}_{[0, \frac{1}{n}]}=\hat{V}_{[\frac{n-1}{n}, 1]}=0$.
\For{$t=2, \cdots, n$}\footnotemark{} 
    \State \multiline{$\hat{\mu}_{[0, \frac{t}{n}]} = \frac{t-1}{t} \hat{\mu}_{[0, \frac{t-1}{n}]} + \frac{1}{t} \lnear^{(t)}$, \\
    $\hat{V}_{[0, \frac{t}{n}]} = \frac{t-1}{t} \hat{V}_{[0, \frac{t-1}{n}]} + \frac{1}{t} (\lnear^{(t)} - \hat{\mu}_{[0, \frac{t-1}{n}]} ) : (\lnear^{(t)} - \hat{\mu}_{[0, \frac{t}{n}]} )$, ($:$ denotes the Frobenius product.) \\
    $\hat{\mu}_{[\frac{n-t}{n}, 1]} = \frac{t-1}{t} \hat{\mu}_{[\frac{n-t+1}{n}, 1]} + \frac{1}{t} \lnear^{(n-t+1)}$, \\
    $\hat{V}_{[\frac{n-t}{n}, 1]} = \frac{t-1}{t} \hat{V}_{[\frac{n-t}{n}, 1]} + \frac{1}{t} (\lnear^{(n-t+1)} - \hat{\mu}_{[\frac{n-t+1}{n}, 1]} ) : (\lnear^{(n-t+1)} - \hat{\mu}_{[\frac{n-t}{n}, 1]} )$
    }
\EndFor
\For{$t=1, \cdots, n$} 
    \State \multiline{$\hat{V}_{[0, \frac{t}{n}]}^C = \hat{V}_{[0, \frac{t}{n}]} + (\hat{\mu}_{[\frac{t}{n}, 1]} - \hat{\mu}_{[0, \frac{t}{n}]}) : (\hat{\mu}_{[\frac{t}{n}, 1]} - \hat{\mu}_{[0, \frac{t}{n}]})$, \\
    $\hat{V}_{[\frac{n-t}{n}, 1]}^C = \hat{V}_{[\frac{n-t}{n}, 1]} + (\hat{\mu}_{[\frac{t}{n}, 1]} - \hat{\mu}_{[0, \frac{t}{n}]}) : (\hat{\mu}_{[\frac{t}{n}, 1]} - \hat{\mu}_{[0, \frac{t}{n}]})$}
\EndFor
\State Compute $\{nT_n(\frac{t}{n})\}_{t=1, \cdots, n}$ according to (\ref{eq:Tn(u)}). 
\State \Return $\sup_{u\in \Ic} nT_n(u)$, $\argmax_{u\in \Ic} T_n(u)$
\EndFunction
\end{algorithmic}
\begin{algorithmic}[1]
\Function{BinarySegmentation}{$\tilde{\mathrm{L}}$} \funclabel{func:binary}
\State \multiline{$q_{1-\alpha}=$\Call{Bootstrap}{$B$, $m$, $\alpha$}\footnotemark{}}
\State \multiline{$z, \hat{\tau} =$ \Call{IncrementalFrechetStatistics}{$\tilde{\mathrm{L}}$}}
\If{$z > q_{1-\alpha}$}
    \State Update $\hat{\mathrm{T}} \gets \hat{\mathrm{T}} \cup \{\hat{\tau} \}$
    \State \Call{BinarySegmentation}{$\lnear^{(1)}, \cdots, \lnear^{(\lfloor n\hat{\tau} \rfloor)}$}
    \State \Call{BinarySegmentation}{$\lnear^{(\lfloor n\hat{\tau} \rfloor +1)}, \cdots, \lnear^{(n)}$}
\EndIf
\State \Return $\hat{\mathrm{T}}$
\EndFunction
\end{algorithmic}
\end{algorithm}
\end{savenotes}
\addtocounter{footnote}{-2}
\stepcounter{footnote}\footnotetext{To reduce the accumulation of numerical errors of $\hat{\mu}_{[\cdot, 1]}$ and $\hat{V}_{[\cdot, 1]}$, we compute their values in reverse order, starting from the last index and working backwards.}
\stepcounter{footnote}\footnotetext{We refer the reader to Section 3.3 of \cite{dubey2020frechet} for the bootstrap scheme.}

\section{Empirical Analysis on Simulated and Real-world Networks}\label{sec: empirical}
We evaluate the performance of the proposed Algorithm \ref{alg:bisect} on multiple change point detection on simulated networks and two real-world datasets. Our results are completely reproducible; the code and datasets used in the experiments are publicly available at \url{https://tinyurl.com/frechet-network}.

\begin{figure}
	\centering
	\includegraphics[width=0.5\textwidth]{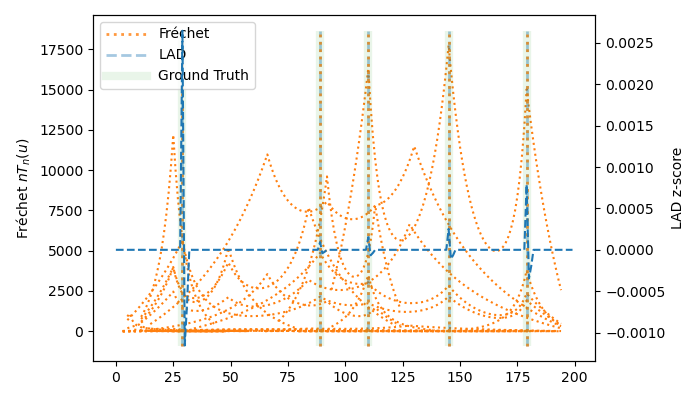}
	\caption{Comparison of our algorithm (Fréchet) with LAD for detecting multiple change points on a dynamic synthetic network of length 200. Both algorithms accurately detect all the change points, as confirmed by the ground truth. Test statistics of the two methods are also plotted: the Fréchet test statistic ${nT_n(u): u\in \Ic}$ for our algorithm and the z-score for LAD. The Fréchet test statistic for each analyzed segment is displayed since binary segmentation is used.}
	\label{fig: synthetic}
\end{figure}

\subsection{Baseline Method}
We use the Laplacian Anomaly Detection (LAD) \cite{huang2020laplacian} as baseline method. 
Our proposed change point detection algorithm uses Fréchet statistic of nearest SPDs of graph Laplacians. It is compared with the recent method LAD, which also employs graph Laplacians. 
LAD uses a low-dimensional embedding to summarize each snapshot of a dynamic network. This embedding is constructed from the singular vectors of the graph Laplacian, and captures temporal dependence by using summary vectors from two sliding windows of different lengths. To detect potential changes, LAD computes a z-score by comparing the current embedding with historical embeddings.

In contrast, our algorithm detects changes in the graph Laplacians themselves by computing the Fréchet distance between the nearest SPD matrices. This approach is more directly related to the underlying network structure and may be more effective in detecting changes in the network topology. Similar to LAD, our method can be adapted to handle changes in the number of nodes.

\begin{figure}
	\centering
	\includegraphics[width=0.5\textwidth]{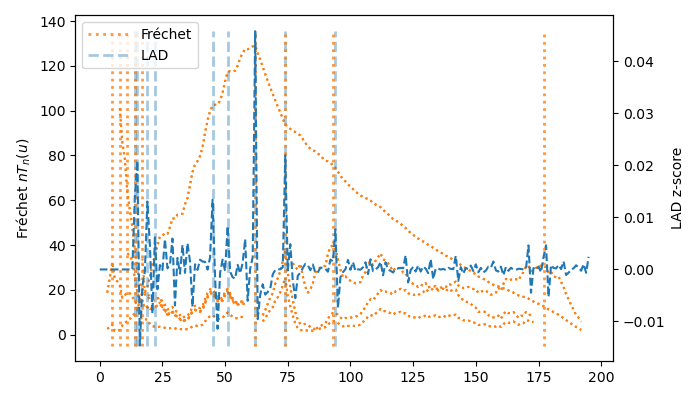}
	\caption{Comparison of our algorithm (Fréchet) with LAD for detecting multiple change points on the UCI message network. Both algorithms closely match the estimated change point locations, even though there is no ground truth available. }
	\label{fig: UCI}
\end{figure}

\subsection{Simulation Study}
To generate synthetic networks with change points, we utilize a network generation model based on the stochastic block model (SBM) \cite{huang2020laplacian}. The SBM incorporates a community vector that assigns each node to a specific block, and a block affinity matrix that specifies the probability of connections within and between blocks. By using the SBM, we can generate undirected, weighted networks with a fixed number of nodes and specified parameters.

This model accounts for two types of change points: the number of equally sized communities can change, as can the block affinity matrix. These changes can result in alterations to the network structure.

We demonstrate the efficacy of our algorithm and compare it with LAD for detecting multiple change points on a dynamic network of length 200. 
Figure \ref{fig: synthetic} shows that both algorithms accurately detect all the change points, as confirmed by the ground truth generated during network formation. We also plot the test statistics of the two methods: the Fréchet test statistic ${nT_n(u): u\in \Ic}$ (\ref{eq:Tn(u)}) for our algorithm and the z-score for LAD. Since we use binary segmentation, we display the Fréchet test statistic for each analyzed segment.

\subsection{Experiments on Real-world Networks}

\subsubsection{UCI Message Network}
The UCI Message dataset \cite{panzarasa2009patterns} deals with an online community of students at the University of California, Irvine. It represents a directed and weighted dynamic network, where each node corresponds to a student user and each edge indicates a message interaction from one user to another. The edge weight reflects the number of characters exchanged between the two users. A self-loop with a unit weight is added to each user at account creation. 
The dataset covers communication patterns from April to October 2004, spanning 196 days, with 1,899 users sending a total of 59,835 messages. We treat each day as an individual time point in our analysis.

For the UCI message network, we show the multiple change points results in Figure \ref{fig: UCI}. Although no ground truth change points are known, we notice that both algorithms have a significant test statistic at day 65, which is the end of spring term, and remain relatively low between the end of spring term and the start of fall term (day 70 to day 160). The estimated change point locations of the two algorithms closely match each other.

\subsubsection{Enron Email Network}
Enron Corporation, an American energy company, was involved in a high-profile accounting fraud scandal that led to its bankruptcy in 2001. Following its collapse, email data from Enron employees was made public \cite{priebe2005scan}. Our study examines the weekly email activity\footnote{We used the processed version which is available at \url{http://www.cis.jhu.edu/~parky/Enron/}.} between employees from November 1998 to June 2002, to determine if changes in email patterns reflect events leading to the company's downfall. We analyze the Enron email network, which comprises 184 email addresses, during the period from November 1998 to June 2002. We represent each unique email address as a node and the number of exchanged emails as the edge weight between the nodes. We treat each week as an individual time point, resulting in a sequence of length 184.

We confirm the detected change point with \cite{dubey2020frechet}. The proposed algorithm successfully locates the date August 23, 2000 (week 89), just before a significant event in the timeline of Enron when its stock prices hit an all-time high, which LAD fails to detect. Table 1 shows other potential change points based on important events. These results provide valuable insights into communication patterns and organizational behavior during real-life corporate transformations or even scandals.

\begin{figure}
	\centering
	\includegraphics[width=0.5\textwidth]{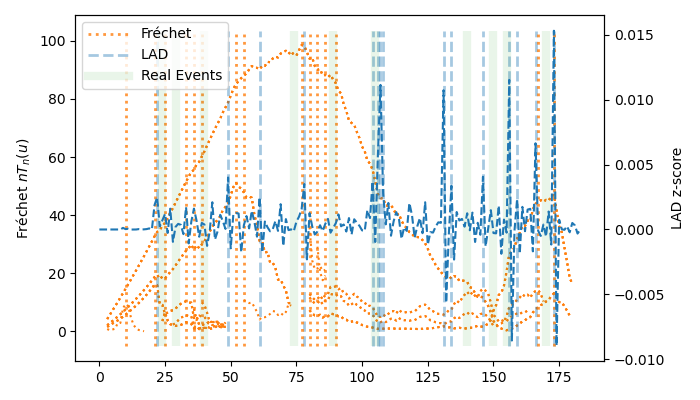}
	\caption{Comparison of our algorithm (Fréchet) with LAD for detecting multiple change points on the UCI message network. Both algorithms closely match the estimated change point locations, even though there is no ground truth available. }
	\label{fig: enron}
\end{figure}

\begin{center}\label{table:enron}
\begin{tabular}{ | m{5em} | m{0.6cm}| m{5.2cm} | } 
\hline
    Date & Week no. & Event  \\ \hline 
   05/24/1999& 24  & the Silverpeak Incident \\ \hline 
   06/28/1999&  29 & Board of Directors exempts CFO to run a private equity fund - LJM1. \\ \hline 
   09/16/1999&  40 & CFO addresses Merrill Lynch to find investors for LJM2 Fund. \\ \hline
   05/12/2000&  74 & Chief trader confirms strategy to exploit market via email. \\ \hline
   08/23/2000&  89 & Stock hits all-time high; FERC orders an investigations. \\ \hline
   12/13/2000&  105 & COO replaces CEO. \\ \hline
   08/13/2001&  140 & CEO resigns after board meeting. \\ \hline 
   10/22/2001&  150 & Enron acknowledges SEC inquiry. \\ \hline
   12/02/2001&  155 & Enron files for bankruptcy. \\ \hline
   03/14/2002&  170 & Former auditor indicted for obstruction of justice. \\
  \hline
\end{tabular}
\end{center}

\section{Conclusions and Extensions}
\label{sec:conclusion}
\noindent
{\bf Conclusions: }
This study addresses change point detection in dynamic social networks using the Fréchet mean and variance to locate and quantify changes in a sequence of graph Laplacians that correspond to graph snapshots. Our method builds upon the original work by Dubey and Müller \cite{dubey2020frechet} and extends it to multiple change point settings.
Compared to other methods, our approach offers several advantages. First, it leverages the closed-form expressions for the Fréchet mean and variance of SPDs, leading to efficient computation. Second, it uses an incremental update scheme, which further reduces computational complexity. Finally, it allows for fine-tuning of the detection accuracy through a user-defined significance level.

To evaluate the effectiveness of our proposed algorithm, we conducted numerical experiments using simulated networks and compared its performance to the LAD baseline method. Furthermore, we applied it to real-world networks. Our results show that our algorithm successfully identified real events in these networks. 
Overall, our experiments provide strong evidence for the efficacy and practical applicability of our proposed algorithm in analyzing dynamic social networks. 

\ifCLASSOPTIONcaptionsoff
  \newpage
\fi



%
\bstctlcite{IEEEexample:BSTcontrol}
\bibliographystyle{IEEEtran}




\end{document}